\documentclass[12pt]{svproc}
\pdfoutput=1
\usepackage[utf8]{inputenc}
\usepackage[]{fontenc}
\usepackage{array}
\usepackage{amsmath}
\usepackage{amssymb}
\usepackage{xspace}
\usepackage{ifdraft}%
\usepackage{proof}
\usepackage{color}
\usepackage{hyperref}
\usepackage{tikz}
\usepackage{wrapfig}
\pagestyle{plain}

\newcommand{\<}{\langle\xspace}
\renewcommand{\>}{\rangle\xspace}
\newcommand{\succc}{\;\mathrm{succ}\;\xspace}

\DeclareMathOperator{\inlOp}{\mathsf{in}_1}
\DeclareMathOperator{\inrOp}{\mathsf{in}_2}
\def\inl#1#2{\inlOp_{,#1}#2}
\def\inr#1#2{\inrOp_{,#1}#2}

\def\caseOf#1#2#3#4#5{\mathsf{case}\; #1\;
    \mathsf{of}\; \left[#2\right]#3,\,\left[#4\right]#5}

\def\letBeIn#1#2#3{\mathsf{let}\; #1\;\mathsf{be}\; #2\; \mathsf{in}\; #3\xspace}
\def\packTo#1#2#3#4{\mathsf{pack}\; #1,\,#2\;\mathsf{to}\;\exists#3.\,#4}

\newcommand{\bbot}{\mbox{{$\bot\hspace{-1.5ex}\bot$}}\xspace}
\def\FV#1{\mathrm{FV}(#1)}

\def\cal#1{\mathcal{#1}}
\def\tuple#1{\left\langle#1\right\rangle\xspace}
\def\defwhen{\quad\textrm{when}\quad}
\def\sqr#1#2{\vbox
 {\hrule height#2
  \hbox{\vrule width#2 height#1 \kern#1 \vrule width#2}%
  \hrule height#2}}
\def\fmp{finite model property\xspace}
\def\efmp{\emph{\fmp}\xspace}
\def\smp{small model property\xspace}
\def\esmp{\emph{\smp}\xspace}

\def\eM{\mathcal M\xspace}
\def\te{\mathfrak t}
\def\lthen{\rightarrow}
\def\figref#1{Fig.~\ref{#1}\xspace}
\def\position{\mathcal{P}}
\def\gameturn{\,\rightsquigarrow\,}
\def\gameturnMove{\,\rightsquigarrow_{\mathrm{move}}\,}
\def\vdashFO{\vdash_\mathrm{IFOL}}
\def\vdashFOx#1{\vdash_{\mathrm{IFOL},#1}}
\newcommand{\pvar}{{\cal X}_\mathrm{p}\xspace}
\bibliographystyle{spmpsci} 
\begin{document}
\title{Small model property reflects in games and automata}
\author{Maciej Zielenkiewicz}
\institute{Institute of Informatics, University of Warsaw, Warsaw, Poland \email{maciekz@mimuw.edu.pl}}
\maketitle
\begin{abstract}
	Small model property is an important property that implies decidability. We  
	show that the small model size is directly related to some important resources 
	in games and automata for checking provability. 
\end{abstract}

\section{Introduction}

Dependent types is one of the popular logic-based approaches developed in the field of functional programming.
With the help of such types it is possible to more precisely capture the behaviour of programs. Intuitionistic first
order logic is the primary form of dependent types and the
Curry-Howard isomorphism strictly relates functional program synthesis and construction of proofs in intuitionistic first
order logic. 

One of the well established ways towards understanding different aspects of logic, proofs and proof search is through
correspondence with different representations, e.g. ones that are more abstract like games and tableaux or ones that are
more detailed like linear logic. One of the most fruitful ideas fulfilling the pattern is the game based approach, in
the spirit of Ehrenfeucht-Fra\"iss\'e games \cite{Fraisse,Ehrenfeucht}. Another game-based technique was
introduced for intuitionistic first order logic \cite{Urzyczyn2016}. The duality between proof-search and countermodel
search \cite{vanBenthem} has been interpreted there in terms of games and was used to make one unified game that yields
either a proof or a Kripke countermodel. 

We extend the game based approach \cite{Urzyczyn2016} to classes that have the \emph{finite model property} which, for
algorithmically well-behaved classes, implies decidability \cite[p.~240]{Borger97}. A stronger property, the \emph{small
model property}, that also gives an upper bound on the complexity of the satisfiability problem, is also studied. As it
turns out these two properties are equivalent for many interesting classes.

We show in the current work a correspondence between the limit of the model size given by the small model property and some resources
in automata and games used for the description of logic. Section~\ref{sec:preliminaries} contains preliminaries and definitions.
Section~\ref{sec:small-afrodite-s} discusses the automata and Theorem~\ref{thm:final1}
bounds the size of the set of
eigenvariables with a number dependent on the number of subformulas in the formula, the limit on the model size and the
number of variables in the initial formula. Section~\ref{sec:small-automata} covers games and Theorem~\ref{thm:final2} shows that a strategy can be constructed that uses a number of
maximal variables at most equal to the limit on the model size; the maximal variable would be understood as the one havina a
maximal, by inclusion, set of known facts.

The paper is structured as follows. Section~\ref{sec:preliminaries} contains preliminaries and definitions necessary to
understand the following sections and discusses basic facts about the small model property. 
Section~\ref{sec:small-afrodite-s} defines a quasiorder on variables capturing the notion of variable with more facts.
Using this order we show that the size of the small countermodel defined in the small model property is
also a limit on the number of \emph{maximal} variables in Afrodite strategy.
Section~\ref{sec:small-automata} shows a limit on the size of the set of eigenvariables of an Arcadian automaton,
which depends on the size of the small model, number of subformulas in the original formula and 
the number of its variables.

\section{Preliminaries}
\label{sec:preliminaries}

We work in intuitionistic first-order logic with no function symbols or constants. The logic is the same as in previous
works on games \cite{Urzyczyn2016} and automata \cite{poprzedni}. There is a set of predicates
$\mathcal P$ and every predicate $P\in\mathcal P$ has a defined \emph{arity}. First order variables are noted as 
$X$, $Y$, \dots (with possible annotations) and form an infinite set $\mathcal X_1$. 
The formulas are understood as abstract syntax trees and the possible formulas are generated with the grammar
\[ 
\tau,\sigma::= P(X,\ldots,X) \mid
                \tau\land\sigma \mid
                \tau\lor\sigma \mid
                \tau\to \sigma \mid 
                \forall X.\tau \mid
                \exists X.\tau \mid 
                \bot.
\]
We define the set $\FV{\tau}$ of free variables of a formula $\tau$ as
\begin{itemize}
\item $\FV{P(X_1,\ldots, X_n)} = \{ X_1,\ldots, X_n\}$,
\item $\FV{\tau_1\ast \tau_2} =
  \FV{\tau_1}\cup\FV{\tau_2}$ where $\ast\in\{\land,\lor,\to\},$
\item $\FV{\triangledown X.\tau} = \FV{\tau}\backslash \{X\}$ where
  $\triangledown\in\{\exists,\forall\}$,
\item $\FV{\bot} = \emptyset$.
\end{itemize}
We assume that there is an infinite set $\pvar$ of \emph{proof term variables}
usually noted as $x, y,$\ldots that can be used to form the
following terms.
$$
\begin{array}{l@{\;}l}
M,& N,P ::= x\mid 
          \<M, N\> \mid \pi_1 M\mid \pi_2 M \mid 
          \lambda x:\varphi.M \mid MN \mid 
          \lambda X M \\& MX \mid  
          \inl{\varphi\lor\psi}M \mid
          \inr{\varphi\lor\psi}M \mid 
          \caseOf{M}{x:\varphi}{N}{y:\psi}{P} \mid \\ &
          \packTo{M}{Y}{X}{\varphi} \mid
          \letBeIn{x:\varphi}{M:\exists X.\varphi}{N} \mid 
	  \bbot_\varphi M
\end{array}
$$
The free variables in terms are defined by structural recursion on the terms, i.e. 
$\FV{\lambda x:\varphi.M}=\FV \varphi \cup \FV M$.
The inference rules for the logic are shown in Figure~\ref{fig:rules}.

\begin{figure}[htb!]

  \begin{center}
$$
\infer[(var)]{\Gamma, x\!:\!\tau\vdash x:\tau}{}
$$
\vspace{-1ex}
$$
\infer[(\land I)]{\Gamma\vdash \<M_1, M_2\> :\tau_1\land \tau_2}
{\Gamma\vdash M_1:\tau_1  &
\Gamma\vdash M_2:\tau_2}
$$
$$
\infer[(\land E1)]{\Gamma\vdash \pi_1 M:\tau_1}
{\Gamma\vdash M: \tau_1\land \tau_2}
\quad
\infer[(\land E2)]{\Gamma\vdash \pi_2M:\tau_2}
{\Gamma\vdash M: \tau_1\land \tau_2}
$$
\vspace{-1ex}
$$
\infer[(\lor I1)]{\Gamma\vdash \inl{\tau_1\lor\tau_2}M :\tau_1\lor \tau_2}
{\Gamma\vdash M:\tau_1}
\qquad
\infer[(\lor I2)]{\Gamma\vdash \inr{\tau_1\lor\tau_2}M :\tau_1\lor \tau_2}
{\Gamma\vdash M:\tau_2}
$$
$$
\infer[(\lor E)]{\Gamma\vdash
  \caseOf{M}{x_1:\tau_1}{N_1}{x_2:\tau_2}{N_2}:\tau}
{\Gamma\vdash M: \tau_1\lor \tau_2 & 
  \Gamma,x_1:\tau_1\vdash N_1: \tau &
  \Gamma,x_2:\tau_2\vdash N_2: \tau}
$$
\vspace{-1ex}
$$
\infer[(\to I)]{\Gamma\vdash\lambda x:\tau_1.M:\tau_1\to \tau_2}
{\Gamma, x\!:\! \tau_1\vdash M:\tau_2}
$$
\vspace{-1ex}
$$
\infer[(\to E)]{\Gamma\vdash M_1M_2:\tau_2}
{\Gamma\vdash M_1: \tau_1\to \tau_2 & \Gamma\vdash M_2: \tau_1}
$$
\vspace{-1ex}
$$
\infer[(\forall I)^*]{\Gamma\vdash \lambda X M : \forall X.\tau}
{\Gamma\vdash M:\tau}
\qquad
\infer[(\forall E)]
{\Gamma\vdash MY:\tau[X:=Y]}
{\Gamma\vdash M:\forall X.\tau}
$$
\vspace{-1ex}
$$
\infer[(\exists I)]{\Gamma\!\vdash\! \packTo{\!M\!}{\!Y\!}{X}{\tau} : \exists
  X.\tau}%
{\Gamma\!\vdash M:\tau[X:=Y]}
$$
\vspace{-1ex}
$$
\infer[(\exists E)^*]
{\Gamma\!\vdash\! \letBeIn{x\!:\!\tau}{M_1\!:\!\exists X.\tau}{M_2}:\sigma}
{\Gamma\!\vdash M_1:\exists X.\tau & \Gamma, x\!:\!\tau\vdash M_2:\sigma}
$$
\infer[(\bot E)]
{\Gamma\vdash \bbot_\tau M:\tau}
{\Gamma\vdash M:\bot}
\end{center}

\rule{5em}{0.5pt}\\
${}^*$ Under the eigenvariable condition
$X\not\in \FV{\Gamma, \sigma}$.

  \caption{The rules of the intuitionistic first-order logic (\cite{poprzedni})}
  \label{fig:rules}
\end{figure}

\subsection{Models}
We follow the definition of Kripke model from the work of Sørensen and Urzyczyn \cite{UrzyczynSorensen}:
A Kripke model is a triple $\tuple{C,\,\leq,\,\{\mathcal A_c: c\in C\}}$ where $C\neq\emptyset$ is a set of states,
$\leq$ is a partial order on $C$ and $\mathcal A_c=\langle A_c, P_1^{\mathcal A_c}, \ldots, P_n^{\mathcal A_c}\rangle$ are 
\emph{structures} such that if $c \leq c'$ then $A_c \subseteq A_{c'}$ and for all $i$ the relation
$P_i^{\mathcal A_c} \subseteq P_i^{\mathcal A_{c'}}$ holds.
A \emph{valuation} $\rho$ maps variables to elements of $A_c$. 
The \emph{satisfaction relation} $c,\rho\models\varphi$ is defined in the usual way:
\[
	\begin{array}{lll}
		c,\rho\models P(t_1, \mathinner{{\ldotp}{\ldotp}{\ldotp}}, t_n) &\textrm{iff} & \mathcal A_c,\rho\models P( t_1, \mathinner{{\ldotp}{\ldotp}{\ldotp}}, t_n) \textrm{ classicaly},\\
		c,\rho\models\tau\lor\sigma  &\textrm{iff} & \mathcal A_c,\rho\models\tau \textrm{ or } \mathcal A_c,\rho\models\sigma,\\
		c,\rho\models\tau\land\sigma &\textrm{iff} &  \mathcal A_c,\rho\models\tau \textrm{ and } \mathcal A_c,\rho\models\sigma,\\
		c,\rho\models\tau\to\sigma   &\textrm{iff} & \textrm{for all } c'\geq c \textrm{ if } c',\rho\models\tau \textrm{ then } c',\rho\models\sigma,\\
		c,\rho\models\forall a \tau&\textrm{iff} & \textrm{for all } c'\geq c \textrm{ if } \hat a \in \mathcal A_{c'} \textrm{ then } c',\rho[\hat a/a]\models\tau,\\
		c,\rho\models\exists a \tau&\textrm{iff} & \textrm{for some } \hat a\in\mathcal A_c,\,c,\rho[\hat a/a]\models\tau\textrm.\\
	\end{array}
\]

\begin{proposition}[completeness, Theorem 8.6.7 of of \cite{UrzyczynSorensen}]
	The Kri\-pke models as defined above are complete for the intuitionistic predicate logic, i.e. $\Gamma\models\tau$ iff $\Gamma\vdash\tau$.
\end{proposition}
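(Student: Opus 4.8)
The plan is to prove the two directions separately: soundness ($\Gamma\vdash\tau\Rightarrow\Gamma\models\tau$) by induction on derivations, and completeness ($\Gamma\models\tau\Rightarrow\Gamma\vdash\tau$) by constructing a canonical Kripke countermodel. First I would establish the \emph{persistence lemma}: if $c\leq c'$ and $c,\rho\models\varphi$ then $c',\rho\models\varphi$, proved by induction on $\varphi$ using the monotonicity conditions $A_c\subseteq A_{c'}$ and $P_i^{\mathcal A_c}\subseteq P_i^{\mathcal A_{c'}}$ that are built into the definition of a model. Atoms persist by these conditions, the implication and universal cases persist automatically since their clauses already quantify over all $c'\geq c$, and the remaining connectives persist by the induction hypothesis.

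With persistence in hand, soundness follows by induction on the typing derivation of Figure~\ref{fig:rules}, checking that each rule preserves the property that every state validating $\Gamma$ also validates the conclusion. The interesting cases are $(\to I)$ and $(\forall I)$, whose semantics range over all successors $c'\geq c$, and the eliminations $(\lor E)$ and $(\exists E)$; for the two starred rules the eigenvariable condition $X\not\in\FV{\Gamma,\sigma}$ is exactly what lets the fresh witness be instantiated uniformly without disturbing the interpretation of $\Gamma$ or $\sigma$.

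For completeness I would argue contrapositively: assuming $\Gamma\nvdash\tau$, I build a model with a state refuting $\tau$ while validating $\Gamma$. Since the language has no constants, I work in a signature enriched with infinitely many fresh variables drawn from $\mathcal X_1$. The states of the canonical model are sets of formulas $\Delta$ that are (i) consistent, (ii) \emph{prime} (if $\varphi\lor\psi\in\Delta$ then $\varphi\in\Delta$ or $\psi\in\Delta$), and (iii) \emph{Henkin-saturated} (if $\exists X.\varphi\in\Delta$ then $\varphi[X:=Y]\in\Delta$ for some variable $Y$), ordered by inclusion; the domain $A_\Delta$ is the set of variables available at $\Delta$, and under the identity valuation an atom $P(\ldots)$ is declared true at $\Delta$ iff it lies in $\Delta$. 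The heart of the argument is the \emph{Truth Lemma}, $\Delta,\rho\models\varphi$ iff $\varphi\in\Delta$, proved by induction on $\varphi$. Granted the Truth Lemma, a Lindenbaum-style extension of $\Gamma$ to a consistent, prime, saturated $\Delta$ with $\tau\notin\Delta$ produces a state at which every formula of $\Gamma$ holds but $\tau$ fails, contradicting $\Gamma\models\tau$.

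I expect the main obstacle to be the $\to$ and $\forall$ cases of the Truth Lemma together with the saturation construction that feeds them. For $\varphi\to\psi\notin\Delta$ one must exhibit an accessible state $\Delta'\supseteq\Delta$ with $\varphi\in\Delta'$ and $\psi\notin\Delta'$, and for $\forall X.\varphi\notin\Delta$ a successor containing a witnessing failure $\varphi[X:=Y]$; both rely on the lemma that any consistent theory not deriving a given formula extends to a prime, Henkin-saturated state still not deriving it. The delicate point is maintaining primeness and the existence property \emph{simultaneously} while reserving enough fresh variables to witness every existential — typically handled by interleaving the three closure steps along an $\omega$-length enumeration of formulas. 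The remaining propositional and $\exists$ cases of the Truth Lemma, and the whole of soundness, are then routine.
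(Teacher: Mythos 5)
The paper does not actually prove this proposition: it is imported verbatim as Theorem 8.6.7 of the cited book of S{\o}rensen and Urzyczyn, so there is no in-paper argument to compare yours against. Your sketch is the standard one --- soundness by induction on derivations after a persistence (monotonicity) lemma, completeness by a Lindenbaum--Henkin canonical model of prime, saturated theories ordered by inclusion with a Truth Lemma --- and this is essentially the proof given in the cited source, so in substance you are reconstructing the intended argument rather than deviating from it.

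One point you should tighten if you were to write this out: your list of closure conditions on the canonical states (consistent, prime, Henkin-saturated) omits \emph{deductive closure}, and several cases of the Truth Lemma silently need it. The right-to-left direction for atoms, the conjunction case ($\varphi\in\Delta$ and $\psi\in\Delta$ whenever $\varphi\land\psi\in\Delta$), and above all the positive $\forall$ case all require that each state contain everything it derives: for $\forall X.\varphi\in\Delta$ you must show $c',\rho[\hat a/X]\models\varphi$ for every successor $\Delta'\supseteq\Delta$ and every element of $A_{\Delta'}$, and the argument is that $\Delta'$ still contains $\forall X.\varphi$, derives $\varphi[X:=Y]$ for each variable $Y$ in its domain by $(\forall E)$, and hence \emph{contains} that instance by closure, so the induction hypothesis applies. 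Relatedly, when you extend a theory to a prime, saturated, closed one while avoiding a fixed formula, the quantity preserved at each stage is ``does not \emph{derive}'' that formula, not mere non-membership. These are standard repairs and do not change the architecture of your proof, but without deductive closure the induction as stated does not go through.
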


\subsection{The finite model property and the small model property}
We focus on classes of formulas that have \efmp. Our definitions closely follow that of Börger et al. \cite{Borger97}:
\begin{definition}[finite model property]
	A class of formulas $X$ has the \efmp when, for all formulas $\tau\in X$, if $\tau$ is satisfiable, there exists a finite model $\eM$
	such that $\eM \models \tau$.
\end{definition}
Since all classical theories can be easily expressed as intuitionistic theories by explicitly
including the law of excluded middle, so there are many interesting classes that have finite model
property.

Although the finite model property in the book by B\"orger et al.~\cite{Borger97} is strongly attached to decidability of a particular fragment of logic, this
is not a property that implies this computational feature. The following property is what actually takes
place in the fragment considered in the book.
\begin{definition}[small model property]
	A class $X$ has the \esmp when there exists a computable function $s_X$  such that for all formulas $\tau\in X$, if $\tau$ is 
	satisfiable, there exists a finite model $\eM$ of size
        $s_X(\tau)$ such that $\eM \models \tau$.
\end{definition}
This definition was used in the book \cite{Borger97} in the context of classical logic. 
It can also be used for intuitionistic first order logic, but the finiteness concerns a different but relevant notion of size.
We say that a model $\eM=\tuple{C,\,\leq,\,\{\mathcal A_c: c\in C\}}$ is finite when $C$ is finite and $\mathcal A_c$ is
finite for all $c\in C$. The number $u=|C|+|\bigcup_{c\in C} \mathcal A_c|$ is the size of the model $\eM$.
\begin{lemma}
	For all formulas $\tau$ from a class $X$ that has the finite model property either $\vdash\lnot\tau$  or there exists 
	a finite model $\eM$ and a state $s$ such that $s,\eM\models\tau$.
	\label{lemma:true-or-false}
\end{lemma}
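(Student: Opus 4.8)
The plan is to argue by a dichotomy on whether $\tau$ is satisfiable, bridging semantics and provability through the completeness proposition. The two disjuncts of the statement correspond exactly to the cases ``$\tau$ satisfiable'' and ``$\tau$ unsatisfiable,'' so the whole argument is a short case analysis once the right semantic facts are in place.

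First I would assume that $\tau$ is satisfiable. Since $\tau$ belongs to a class $X$ with the finite model property, the definition yields a \emph{finite} model $\eM$ with $\eM\models\tau$. Unfolding what it means for $\tau$ to hold in $\eM$ according to the satisfaction relation produces a state $s$ of $\eM$ with $s,\eM\models\tau$, which is precisely the second disjunct. This case is essentially definitional; the only thing to check is that the notion of satisfaction used in the definition of the finite model property coincides with the existence of a state $s$ realising $\tau$.

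Second I would assume that $\tau$ is not satisfiable and aim at $\vdash\lnot\tau$. The idea is to show that $\lnot\tau$ is valid and then invoke completeness. Reading $\lnot\tau$ as $\tau\to\bot$ and unfolding the clause for $\to$ in the satisfaction relation, the statement $c,\rho\models\lnot\tau$ reduces to: for every $c'\geq c$ one has $c',\rho\not\models\tau$, because no state ever satisfies $\bot$. Unsatisfiability of $\tau$ says that no state of any model satisfies $\tau$ under any valuation, so this condition holds vacuously at every state of every model; hence $\models\lnot\tau$. Applying the completeness proposition with the empty context to the formula $\lnot\tau$ then gives $\vdash\lnot\tau$, the first disjunct.

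The individual steps are routine, so the main care is conceptual rather than computational: one must pin down the precise meaning of ``satisfiable'' and of ``$\eM\models\tau$'' so that they align with ``there is a state $s$ with $s,\eM\models\tau$,'' and one must treat free variables consistently---either restricting to sentences or carrying the valuation $\rho$ through the argument, reading unsatisfiability as the absence of any state-valuation pair realising $\tau$. The one genuinely load-bearing ingredient is the completeness proposition, which is what converts the semantic validity of $\lnot\tau$ into its derivability.
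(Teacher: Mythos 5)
Your proof is correct, and it takes a cleaner decomposition than the paper's. You split on whether $\tau$ is satisfiable: in the positive case the finite model property hands you a finite model and a state directly, and in the negative case you observe that unsatisfiability makes $\lnot\tau$ (read as $\tau\to\bot$) vacuously true at every state of every Kripke model, so completeness yields $\vdash\lnot\tau$. The paper instead runs a three-way case analysis on whether $\tau$ is valid, $\lnot\tau$ is valid, or neither; in the third case it exhibits a model with a state $s$ where $\tau$ fails and a later state $s'>s$ where it holds, and takes the submodel rooted at $s'$ as a witness of satisfiability --- after which the finite model property must still be invoked to obtain a \emph{finite} model, a step the paper leaves implicit. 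Both arguments ultimately rest on the same semantic fact, namely that $\tau$ is unsatisfiable iff $\lnot\tau$ is valid; you prove it directly by unfolding the semantics of implication and $\bot$, whereas the paper establishes the contrapositive via monotonicity and the submodel construction. Your route avoids the middle case entirely and makes the reliance on completeness and on the FMP definition explicit; the one point requiring care, which you already flag, is the treatment of free variables and the reading of $\eM\models\tau$ as the existence of a state and valuation satisfying $\tau$.
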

\begin{proof}
	If $\lnot\tau$ or $\tau$ is true the proof follows by the completeness theorem and by definition.
	Otherwise there exists a model $\eM$ of class $X$ such that $s,\eM\not\models\tau$, and a state $s'>s$ such that
	$s',\eM\models\tau$ (if $M$, $s$ and $s'$ do not exist either
        $\tau$ or $\lnot\tau$ would be valid in the model). Note that it does not necessarily mean
	that a proof for $\lnot\tau$ exists.
	But, since Kripke models are monotonous, $(s',\eM)\in X$ and we have a model of $\tau$ in the class $X$: the 
	part of the original model starting in $s'$.\qed
\end{proof}
\begin{definition}[effective class]
  We say that a model $\eM$ is a model of a class of formulas $X$ when
  for each $\varphi\in X$ it holds that $\eM\models \varphi$.
  
  A class of formulas $X$ is \emph{effective} iff, it is decidable
  that given a final model $\eM$ whether $\eM$ is a model of the class
  $X$. 
\end{definition}
For example every class that has a finite number of axioms or axiom schemes, as well as prefix classes from the book of
Börger et al \cite{Borger97} is effective.
\begin{proposition}
	An effective class $X$ has \efmp iff it has \esmp.
\end{proposition}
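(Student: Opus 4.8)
The plan is to prove the two implications separately; the implication \esmp $\Rightarrow$ \efmp is immediate, and the effectiveness hypothesis is used entirely in the converse. If $X$ has \esmp, then for every satisfiable $\tau\in X$ there is a finite model of size $s_X(\tau)$, which is in particular a finite model, so $X$ has \efmp. No hypothesis on $X$ is needed for this direction.

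For the converse, assume $X$ has \efmp and build the bounding function $s_X$ by a search procedure. First I would observe that for a finite model the satisfaction relation $c,\rho\models\tau$ is decidable: the set of states $C$ and each domain $A_c$ are finite, the clauses for $\to$ and $\forall$ range only over the finitely many states $c'\geq c$, and the clauses for $\exists$ and for atoms range over finite domains, so $c,\rho\models\tau$ can be evaluated by structural recursion on $\tau$. Combined with effectiveness of $X$, which by definition makes it decidable whether a given finite model is a model of the class $X$, the procedure can test, for any candidate finite model, both whether it is a model of $X$ and whether some state of it satisfies $\tau$. I would then enumerate all finite models up to isomorphism in order of nondecreasing size and, dovetailing with this, enumerate candidate derivations in the proof system of \figref{fig:rules}. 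The procedure halts as soon as it either exhibits a finite model of $X$ with a state satisfying $\tau$, in which case I set $s_X(\tau)$ to the size of that model in the sense defined above, or exhibits a derivation of $\lnot\tau$, in which case I set $s_X(\tau)$ to $0$.

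The correctness and totality of $s_X$ rest on the dichotomy of Lemma~\ref{lemma:true-or-false}: for any $\tau$ from a class with \efmp, either $\vdash\lnot\tau$ or some state of a finite model satisfies $\tau$. Since the rules of \figref{fig:rules} form a recursive proof system, the provable formulas are recursively enumerable, so the proof-search branch terminates whenever $\vdash\lnot\tau$; and whenever $\tau$ is satisfiable, \efmp supplies a finite model of the class $X$ together with a satisfying state, which the enumeration eventually reaches, so the model branch terminates. Thus at least one branch always halts and $s_X$ is total and computable. If $\tau$ is satisfiable then $\not\vdash\lnot\tau$ by the completeness Proposition, so the halting branch must be the model branch, and $s_X(\tau)$ is then the size of a genuine finite model of $\tau$ in $X$, which is exactly the witness required by \esmp.

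I expect the main obstacle to be termination on unsatisfiable inputs rather than the search itself: enumerating finite models alone would diverge on a $\tau$ with no finite model of $X$, so $s_X$ would fail to be total. Dovetailing the model search with proof search for $\lnot\tau$, and appealing to Lemma~\ref{lemma:true-or-false} to certify that these two semidecision procedures together cover all of $X$, is precisely what overcomes it; effectiveness of $X$ is what lets the model branch recognize legitimate class models and thereby be a genuine decision step rather than an unbounded guess.
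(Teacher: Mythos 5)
Your proposal is correct and follows essentially the same route as the paper: the right-to-left direction is immediate, and the converse is handled by dovetailing an enumeration of finite models (checked for membership in $X$ via effectiveness and for satisfaction of $\tau$) with a proof search for $\lnot\tau$, with Lemma~\ref{lemma:true-or-false} guaranteeing that one branch halts. Your added remarks on the decidability of $\models$ over finite models and the harmlessness of the returned value on unsatisfiable inputs only make explicit what the paper leaves implicit.
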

\begin{proof}
  The implication from right to left is trivial. For the other one: we show how to compute $s_X(\tau)$.
  Given $\tau$ we run two processess in parallel: one generates finite models and checks whether they are models of X
  and then if $\tau$ is satisfied in them, and another one generates proof and checks if one of them proves $\lnot\tau$.
  If the first process succeeds, we return the size of the model found, and if the second suceeds we return 1.
  Lemma~\ref{lemma:true-or-false} shows that one of the processess succeeds. 
  This function is correct as it returns
  the size of a finite model if it exists, and otherwise the formula is not satisfiable, so the return value
  does not matter. \qed
\end{proof}
In the proof above we can also make the function return the smallest finite model by
enumerating the models ordered by size, but it is not needed in this paper.

%

\section{Small model size and small Afrodite strategies}
In this section we show that from a finite countermodel of a given size we can construct a small
Afrodite strategy. First we introduce games, strategies and introduce tools to replace some
variables in a strategy. Then we use these tools to show a limit on the set of eigenvariables in the
game depending on the size of the small model, proving the Theorem \ref{thm:final1}.

\label{sec:small-afrodite-s}

\subsection{Better variables}

\paragraph*{Notation.} We define \emph{substitutions} applied to a formula: $\varphi[x/y]$
is the formula $\varphi$ with all free occurrences of $x$ replaced by $y$ in a capture-avoiding fashion.
The \emph{disjuncts} of a formula $\alpha\lor\beta$ are $\alpha$ and $\beta$, and for a
formula that is not a disjunction the whole formula is called a disjunct. We understand the formula
$\alpha\lor\beta\lor\gamma$ to mean $\alpha\lor(\beta\lor\gamma)$, but understanding it as a disjunction of three
disjuncts would also be possible with minor technical changes.

\paragraph*{Games.} We show how the small model property can be
expressed in terms associated with the notion of intuitionistic games for first-order logic as defined in Section 5 of
the work by Sørensen and Urzczyn \cite{Urzyczyn2016}. 
The game describes a search for a proof and has two players: Eros, tryig to prove the judgement and
Afrodite, showing that it can't be proven.
We write $\Gamma\vdash\tau \gameturnMove \Gamma'\vdash\tau'$
to state that the positions $\Gamma\vdash\tau$ and
$\Gamma'\vdash\tau'$ are connected with a turn.
A \emph{game} is a sequence of \emph{positions} connected by
\emph{turns}, i.e.\ a sequence ${\cal P}_1,\ldots,{\cal P}_n,\ldots$
such that ${\cal P}_i\gameturnMove {\cal P}_{i+1}$ for each $i\in\mathbb{N}$.
 Possible moves are
shown in \figref{fig:moves}.
We omit the subscript ``move'' when it is not needed or clear from the context. 
A game starts in a position $\Gamma\vdash\tau$ and begins with Eros' move, followed by Afrodite's move which determines the next turn. If Eros reaches a \emph{final position} he wins, otherwise the game is infinite and Afrodite wins.
We call $\Gamma\vdash\tau$ the \emph{precedent} and $\Gamma'\vdash\tau'$ the \emph{antecedent} of the move. 
Some turns have players associated with them:
if Afrodite makes a choice in a move we call the precedent an \emph{Afrodite's position}, and if Eros makes a choice we call it an \emph{Eros position}.
A disjunct of $\tau$ is an \emph{aim}. In order to avoid confusion with classical provability we write $\Gamma\vdashFO\tau$ to denote that $\tau$ is
provable from $\Gamma$ in first-order intuitionistic logic. If the exact proof $p$ is important we use the notation
$\Gamma\vdashFOx p\tau$.

\paragraph*{Strategies.} A strategy is a tree of nodes labeled by
positions linked by edges labeled by turns, which we call \emph{moves}. In each position either one or none of the
players makes a choice. 
In a position with no choice the next position is determined by game rules (see Figure~\ref{fig:moves}) and the corresponding
turn must appear in the strategy.
For Afrodite strategy the tree consists of non-final positions and all paths are infinite as well as the tree has at least one
move in each Afrodite's position and all the possible moves (up to renaming of fresh and bound variables) in Eros
positions.
A final position is a position in which $\tau\in\Gamma$ or $\bot\in\Gamma$. 
For Eros strategy all paths end at a final position and the tree has at least one
move in each Eros' position and all the possible moves (up to renaming of fresh and bound variables) in Afrodite
positions.
It should be obvious that if Eros cannot make a move that introduces something new to the game, he is forced to replay one of the previous moves and Afrodite wins.

\paragraph*{Ordering of variables.}
Intuitively speaking we would like to capture the fact that one variable is ``better'' than the other if 
all the information that was known about the ``worse'' variable is kept and possibly extended with new facts.
More formally we say $x_1\preceq_\Gamma x_2$ when for every formula $\tau$
\begin{quote}
	if $\Gamma\vdashFO\tau$ then
	$\Gamma\vdashFO(\tau[x_1/x_2])$.
\end{quote}
The relations $\prec_\Gamma$ and $\sim_\Gamma$ are defined in the following way:
  \[ x_1\sim_\Gamma x_2  \defwhen  x_1\preceq_\Gamma x_2 \land  x_2\preceq_\Gamma x_1, \] %
  \[ x_1\prec_\Gamma x_2 \defwhen  x_1\preceq_\Gamma x_2 \land  x_2\not\sim_\Gamma x_1. \] %
In cases when $\Gamma$ is clear we omit it for brevity.
\begin{proposition}
	For any $\Gamma$, the relation $\preceq_\Gamma$ is a quasiorder, but not a partial order.
\end{proposition}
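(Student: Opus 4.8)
The plan is to verify the two defining properties of a quasiorder---reflexivity and transitivity---and then to exhibit two distinct variables that are $\sim_\Gamma$-equivalent, which rules out antisymmetry. Reflexivity is immediate: for any variable $x$ the substitution $\tau[x/x]$ is syntactically $\tau$, so $\Gamma\vdashFO\tau$ trivially gives $\Gamma\vdashFO\tau[x/x]$, i.e.\ $x\preceq_\Gamma x$.

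For transitivity, assume $x_1\preceq_\Gamma x_2$ and $x_2\preceq_\Gamma x_3$ and let $\tau$ satisfy $\Gamma\vdashFO\tau$. The tool I would rely on is the admissibility of variable-for-variable substitution on whole derivations: pushing $[x_1/x_2]$ through a derivation of $\Gamma\vdashFO\tau$---after renaming the eigenvariables of $(\forall I)$ and $(\exists E)$ to be globally fresh, so the side condition $X\notin\FV{\Gamma,\sigma}$ survives the substitution---yields a derivation of $\Gamma[x_1/x_2]\vdashFO\tau[x_1/x_2]$. Combined with a cut, this gives the convenient reformulation that I would establish first as a preliminary lemma: $x_1\preceq_\Gamma x_2$ holds iff $\Gamma\vdashFO\gamma[x_1/x_2]$ for every assumption formula $\gamma$ occurring in $\Gamma$.

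The main obstacle is the transitivity step itself, which is more delicate than a one-line composition suggests. Chaining the hypotheses produces $\Gamma\vdashFO\tau[x_1/x_2]$ and then $\Gamma\vdashFO\tau[x_1/x_2][x_2/x_3]$, but this doubly-substituted formula also sends the \emph{original} free occurrences of $x_2$ in $\tau$ to $x_3$, so it coincides with the desired $\tau[x_1/x_3]$ only when $x_2\notin\FV{\tau}$. Reconciling $\tau[x_1/x_2][x_2/x_3]$ with $\tau[x_1/x_3]$ in the presence of genuine $x_2$-occurrences is the crux, and I expect essentially all the difficulty to sit here; the natural route is to descend to the assumption-level characterization above and track the occurrences of $x_2$ separately---relabelling the genuine ones through a fresh variable before chaining---so that the two substitutions do not collide. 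Reflexivity and the failure of antisymmetry are routine by comparison.

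Finally, to see that $\preceq_\Gamma$ is not a partial order I would produce distinct yet $\sim_\Gamma$-equivalent variables. Since $\mathcal X_1$ is infinite while $\FV{\Gamma}$ is finite, choose distinct $x_1,x_2\notin\FV{\Gamma}$. For any $\tau$ with $\Gamma\vdashFO\tau$, because $x_1\notin\FV{\Gamma}$ the rule $(\forall I)$ generalizes to $\Gamma\vdashFO\forall x_1.\tau$, and $(\forall E)$ instantiated at $x_2$ gives $\Gamma\vdashFO\tau[x_1/x_2]$; hence $x_1\preceq_\Gamma x_2$, and the symmetric argument (using $x_2\notin\FV{\Gamma}$) gives $x_2\preceq_\Gamma x_1$. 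Thus $x_1\sim_\Gamma x_2$ with $x_1\neq x_2$, so antisymmetry fails and $\preceq_\Gamma$ is not a partial order.
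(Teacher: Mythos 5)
Your treatment of reflexivity and of the failure of antisymmetry is correct and essentially coincides with the paper's: the paper also picks two distinct variables outside $\FV\Gamma$, and your $(\forall I)$/$(\forall E)$ justification of $x_\alpha\preceq_\Gamma x_\beta$ for fresh variables is exactly the argument it leaves implicit. The problem is transitivity. You have correctly located the crux --- $\tau[x_1/x_2][x_2/x_3]$ agrees with $\tau[x_1/x_3]$ only when $x_2\notin\FV\tau$ --- but you then only gesture at a repair (``relabel the genuine $x_2$-occurrences through a fresh variable before chaining''), and that repair cannot be carried out: it would require passing from $\Gamma\vdashFO\tau$ to $\Gamma\vdashFO\tau[x_2/w]$ for a fresh $w$, which fails whenever $\Gamma$ constrains $x_2$ (already for $\Gamma=\{P(x_2)\}$ and $\tau=P(x_2)$). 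For comparison, the paper's own proof simply asserts the identity $\tau[x_1/x_2][x_2/x_3]=\tau[x_1/x_3]$ that you rightly refused to accept, so your attempt stalls precisely where the published argument is unsound.

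In fact no argument can close this gap, because transitivity fails for the relation as literally defined. Take $\Gamma=\{P(x_1,x_2),\,P(x_2,x_2),\,P(x_1,x_3),\,P(x_2,x_3),\,P(x_3,x_3)\}$. By your (correct) assumption-level characterization, $x_1\preceq_\Gamma x_2$ and $x_2\preceq_\Gamma x_3$ hold, since substituting $x_2$ for $x_1$, respectively $x_3$ for $x_2$, maps each assumption to another assumption. Yet $P(x_1,x_2)[x_1/x_3]=P(x_3,x_2)$ is not derivable from $\Gamma$: the one-world Kripke model with domain $\{1,2,3\}$, $x_i\mapsto i$ and $P$ interpreted as exactly the five listed pairs satisfies $\Gamma$ but not $P(x_3,x_2)$. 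Hence $x_1\not\preceq_\Gamma x_3$. So your proof is incomplete at exactly the point where the statement itself is in trouble; making it true requires changing the definition (for instance, quantifying only over formulas in which the target variable does not occur free, or defining ``more information'' by simultaneous containment of the sets of derivable facts), and the later uses of transitivity and of maximal variables, e.g.\ in Proposition~\ref{prop:replace}, would need to be re-examined against whichever corrected definition is adopted.
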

\begin{proof}
	The relation $\preceq_\Gamma$ is trivially reflexive and transitivity follows immediately from definition
	with help of an observation that \linebreak$\tau[x1/x2][x2/x3]=\tau[x1/x3]$, so it is a quasiorder.

	If we choose two distinct fresh variables $x_\alpha$ and $x_\beta$, i.e.~not in $\FV\Gamma$, we have
	$x_\alpha\preceq x_\beta$ and $x_\beta\preceq x_\alpha$, but $x_\alpha\neq x_\beta$, so $\preceq$ is not
	a partial order.\qed
\end{proof}
This leads to the conclusion that the only important variables are those that are 
\emph{maximal in the $\preceq$ relation}, as we can replace all the other variables with their
maximal counterparts.
\begin{proposition}
	\label{prop:replace_ifol}
	Let $\tau$ be a formula such that $\FV\tau=x_1,\,\ldots,x_n$. If $\Gamma\vdashFO\tau$, then 
	$\Gamma\vdashFO\tau\left[ x_1/x_1',\,\ldots,\,x_n/x_n' \right]$, 
	where, for all $i$, $x_i\preceq x_i'$.
\end{proposition}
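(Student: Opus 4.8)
The plan is to reduce the simultaneous substitution to a sequence of single-variable substitutions, each justified directly by the defining property of $\preceq$. The one-variable case is immediate: if $\Gamma\vdashFO\tau$ and $x\preceq x'$, then applying the definition of $\preceq_\Gamma$ to the formula $\tau$ itself yields $\Gamma\vdashFO\tau[x/x']$. Hence the entire content of the statement lies in passing from a single replacement to the simultaneous one $\tau[x_1/x_1',\ldots,x_n/x_n']$, and I would prove this by induction on $n$.

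In the inductive step I would peel off one pair $x_i,x_i'$, use the definition of $\preceq$ to obtain $\Gamma\vdashFO\tau[x_i/x_i']$, and then invoke the induction hypothesis on the remaining $n-1$ pairs applied to $\tau[x_i/x_i']$. The free variables of $\tau[x_i/x_i']$ are $\{x_j : j\neq i\}$ together possibly with $x_i'$, which is to be left fixed; this is legitimate since $x_i'\preceq x_i'$ by reflexivity of the quasiorder. The one point that needs care is that the composite of these single substitutions must agree with the intended \emph{simultaneous} substitution. This holds precisely when the target $x_i'$ we peel off does not coincide with any still-pending source $x_j$ with $j\neq i$; otherwise the newly created occurrences of $x_i'=x_j$ would be rewritten a second time, corrupting the result.

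The main obstacle is therefore the choice of a safe substitution order, and the fact that a safe first step need not exist: this fails exactly when the relation ``$x_i'$ equals another source $x_j$'' contains a cycle, the paradigmatic case being a transposition $x_1'=x_2,\ x_2'=x_1$. Here I would exploit the quasiorder structure. Running once around such a cycle gives $x_{i_1}\preceq x_{i_2}\preceq\cdots\preceq x_{i_k}\preceq x_{i_1}$, so by transitivity all variables occurring in the cycle are $\sim_\Gamma$-equivalent; thus any problematic configuration is confined to a single $\sim_\Gamma$-class. In the intended application one fixes, for each $\sim_\Gamma$-class, a single maximal representative and sends every variable to the representative of its class. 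Under this consistent choice a target can equal a source only when that source is already its own representative, i.e.\ at a fixed point where the substitution is a no-op, so no cycle arises and the composite of single substitutions is literally the simultaneous substitution. With such an order fixed the peeling argument closes the induction, each step justified by the definition of $\preceq$, and we obtain $\Gamma\vdashFO\tau[x_1/x_1',\ldots,x_n/x_n']$.
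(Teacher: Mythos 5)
Your proposal is essentially the paper's own proof, which consists of the single line ``apply the definition of $\preceq$ for each $x_i$ in turn''; the core idea --- reducing the simultaneous substitution to iterated single replacements, each licensed directly by the definition of $\preceq_\Gamma$ --- is identical. The additional care you take is a genuine point the paper elides: iterated substitution agrees with the simultaneous one only when no target $x_i'$ coincides with a still-pending source $x_j$, and your observation that any offending cycle forces all its members into one $\sim_\Gamma$-class (so that choosing a fixed maximal representative per class removes the obstruction) is exactly what is needed to make the one-line argument sound in the setting where the proposition is actually used.
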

\begin{proof}
	We apply the definition of $\preceq$ for each $x_i$ in turn.\qed
\end{proof}

\noindent

\begin{proposition}
	If Eros or Afrodite has a strategy in position $\position=\Gamma\vdash\tau$ 
	and if at some position $\position_{i_0}$ and all subsequent positions in that strategy
	we have $x_1'\preceq x_1$, \ldots, $x_n'\preceq x_n$
	then we can replace all occurrences of variables $x_i'$ with $x_i$ at $\position_{i_0}$ and 
	the same player still has strategy in position $\position_{i_0}$.
	\label{prop:replace}
\end{proposition}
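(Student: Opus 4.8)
The plan is to take the given strategy, restrict to the subtree rooted at $\position_{i_0}$, apply the substitution $\theta=[x_1'/x_1,\ldots,x_n'/x_n]$ uniformly to every position occurring in that subtree, and argue that the result is again a strategy for the same player (now rooted at $\theta\position_{i_0}$). Since strategy trees can be infinite, I would not induct on the whole tree at once; instead I would prove a local invariant: that $\theta$ carries each individual turn $\position\gameturnMove\position'$ of the subtree to a legal turn $\theta\position\gameturnMove\theta\position'$, and that it preserves the bookkeeping that makes the tree a strategy, namely which player owns a position and whether a position is final. Because the turns of \figref{fig:moves} are dictated by the logical shape of $\tau$ and of the formulas of $\Gamma$, while $\theta$ only renames first-order variables, the connective being decomposed is unchanged, so legality of every purely structural turn transfers at once.

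First I would treat the turns that mention variables explicitly. For the witness-providing turns ($\forall$ on the left, $\exists$ on the right) the active player picks a variable available in $\Gamma$; applying $\theta$ to that witness yields a legal choice after substitution, so any turn made by the player whose strategy we preserve survives. For the eigenvariable turns ($\forall I$, $\exists E$) the freshness side-condition might clash with some $x_i$; here I would invoke the clause that strategies are taken up to renaming of fresh and bound variables, renaming the introduced eigenvariable away from $\{x_1,\ldots,x_n\}$ before substituting, which restores the side-condition.

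Next I would verify the branching requirement. In the positions owned by the opponent the strategy must contain all legal turns, and after substitution the subtree must still contain all legal turns of $\theta\position$. The key observation is that $\theta$ can only identify variables, never create new ones, so the pool of admissible witnesses, and hence the set of legal turns, at $\theta\position$ is exactly the $\theta$-image of the corresponding set at $\position$; every post-substitution turn is therefore $\theta$ applied to a turn already present, with several original turns possibly collapsing onto one. Retaining these images (and discarding duplicates) gives a subtree that still realises every opponent option, so completeness of the opponent's branching is preserved.

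The main obstacle is the winning condition, where the two players behave asymmetrically. For Eros every maximal path must reach a final position with $\tau\in\Gamma$ or $\bot\in\Gamma$; since $\tau\in\Gamma$ gives $\theta\tau\in\theta\Gamma$ and $\theta\bot=\bot$, finality is preserved (and a path that becomes final earlier may simply be truncated), so the Eros case reduces to stability of provability under substitution, sharpened by Proposition~\ref{prop:replace_ifol}. The delicate case is Afrodite, whose positions must all remain non-final and whose plays must remain infinite: I must rule out that a collapse $\theta\sigma=\theta\tau$ with $\sigma\in\Gamma$ turns a genuine Afrodite position into a final one. This is precisely where the hypothesis that $x_i'\preceq x_i$ holds at $\position_{i_0}$ \emph{and at every subsequent position} is indispensable, and I would feed it through Proposition~\ref{prop:replace_ifol} to control which aims stay unprovable after substitution. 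Carrying out this non-finality check uniformly along all, possibly infinite, branches, and confirming that it really follows from the ordering assumption rather than merely from substitution stability, is the crux of the argument.
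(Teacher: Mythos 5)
Your overall route coincides with the paper's: restrict to the subtree rooted at $\position_{i_0}$, push the substitution through every node, observe that all moves of Figure~\ref{fig:moves} except the witness-choosing ones (a4) and (b5) are insensitive to the renaming, handle the eigenvariable moves via the ``up to renaming of fresh and bound variables'' convention, and split on which player owns the strategy, with positions that become final only helping Eros. Your extra observation that the legal turns at a substituted position are exactly the images of the legal turns at the original one (so completeness of the opponent's branching survives, with collapsed duplicates discarded) is a point the paper leaves implicit and is worth keeping.

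The genuine gap is that you stop at precisely the step you yourself call the crux: showing that no Afrodite position becomes final after the substitution. Proposition~\ref{prop:replace_ifol} cannot do this job on its own, because it controls provability, whereas finality is the syntactic condition $\tau\in\Gamma$ or $\bot\in\Gamma$; a collision $\theta\sigma=\theta\tau$ with $\sigma\in\Gamma$ only yields $\Gamma\vdashFO\theta\tau$, which says nothing directly about the original aim $\tau$, so the ordering hypothesis alone does not discharge the obligation. What is needed --- and what the paper supplies, tersely, in its discussion of (a4) and (b5) --- is a simulation argument: any advantage Eros could extract from identifying $x_i'$ with $x_i$ (a witness choice, or a resulting finality) is one he could already have realised in the original game by playing the variables $x_i$, which are available in $\Gamma$ from $\position_{i_0}$ onward; since Afrodite's strategy tree by definition contains all Eros options and defeats each of them, her answers to those plays transport to the substituted tree. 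Without this step, or something in its place, the Afrodite half of your argument is announced rather than proved, while the Eros half is complete.
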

\begin{proof}
	The replacement is done while looking at the whole game tree and with maximum knowledge (i.e.~trueness
	of predicates through the whole game tree) about variables, which is not a problem
	since our aim is to construct the strategy for Afrodite, which implies knowledge of all the possible
	turns.

	Suppose Afrodite has a strategy in $\position$.
	Let us focus on a path in the tree of the strategy
	\[ 
		\Gamma_1\vdash\tau_1 \gameturn \Gamma_2\vdash\tau_2 \gameturn \ldots \gameturn
		\Gamma_n\vdash\tau_n\gameturn\ldots 
	\]
	Each of the moves may add something to $\Gamma$, but nothing is removed and we can separate
	the newly added facts:
	\[ 
			\Gamma_2 = \Gamma_1,\,\psi_1 \quad \dots \quad \Gamma_{n+1} = \Gamma_n,\,\psi_n\\
	\]
	Another view of the new facts would be to separately keep track of those referring to the
	variables $x_i$:
	\[ \Gamma_n = \Gamma_1,\,\Delta_n,\hat\Delta_n \]
	where $\Delta_n$ has all the facts that reference the variables $x_i$ and
	$\hat\Delta_n$ the others. To make the notation concise we write $\Gamma,\Delta_1,\Delta_2\vdash\tau$ 
	as a shorthand for $\Gamma,(\Delta_1\cup\Delta_2)\vdash\tau$.

	Instead of taking the original path we can take the following one
	\[ 
		\Gamma_1\vdash\tau_1 \gameturn \Gamma_2'\vdash\tau_2' \gameturn \ldots \gameturn
		\Gamma_n'\vdash\tau_n'\gameturn\ldots 
	\]
	where $\Gamma_n'=\Gamma_{n-1},\,\psi_{n-1}'$, $\psi_n'=\psi[x_i'/x_i]$ and
	$\tau_n'=\tau_n[x_i'/x_i]$. Or, viewed in the terms of $\Delta$s, 
	\[ \Gamma_n'=\Gamma_1, \Delta_n', \hat\Delta_n \]
	where $\Delta_n'=\Delta_n[x_i'/x_i]$.\\
	To make this construction sound we need to show that
	$\Gamma_1,\,\Delta_n',\,\Delta_n\vdash x_i' \preceq x_i$
	and that the move 
	$(\Gamma_1,\,\Delta_n',\,\hat\Delta_n\vdash \tau_n')
	\gameturn
	(\Gamma_1,\,\Delta_{n+1}',\,\hat\Delta_{n+1}\vdash \tau_{n+1}')$
	is possible.
	The first part follows directly from Proposition~\ref{prop:replace_ifol}.
	For the second part we show how to adapt the original move $\Gamma_n\vdash\tau_n$.
	The possible moves are listed in
	\figref{fig:moves}. Only two of them have direct interaction with non-fresh variables:
	in (a4) and (b5) Eros is free to choose any variable and the replacement variables $x_1, \dots, x_n$
	are already available, so would
	not lead him to winning the game, otherwise he could have played this move in the original
	strategy. 

	The other case is when Eros has a strategy in $\position$.
	The substitution is almost the same as in the previous case except some nonfinal positions might become final,
	as the set of facts known about $x_i'$ is bigger or equal to those that were known about $x_i$, as $x_i'\preceq
	x_i$.
	Final positions remain final by Proposition \ref{prop:replace_ifol}.
	Nonfinal positions might become final, but it only makes Eros win faster.\qed
\end{proof}

\subsection{Construction of the strategy}
\paragraph*{Small strategies of Afrodite.} 
With the aim of relating the size of the Afrodite strategy and the size of the small model we define
a notion of a small strategy. 
Proposition \ref{prop:replace} suggest the following definition.
Since we know that using only the \emph{maximal} variables
is sufficient in the game, we define \emph{small strategy of Afrodite} for a formula $\tau$ from 
class $X$ as a strategy that has at most $s_X(\tau)$ $\simeq$-classes of abstraction of
maximal variables.
Given a small countermodel $\eM$ of a formula we aim to construct a small winning Afrodite strategy $S$, i.e.
one that gives at least one possible response for each possible Eros' move.
For a given turn $\te = \Gamma\vdash\tau$ we need to choose 
a response to Eros moves. We associate a state $s\in\eM$ with each turn $\te$.
Our strategy has the following invariant that holds at each turn:
\begin{equation}
	\exists_{\rho:\FV\Gamma\to \mathcal A_s}\left(  \rho,s\models\Gamma \right) \quad\land\quad \forall_{\rho:\FV\Gamma\to
	\mathcal A_s}\left(  \rho,s\models\Gamma\to \rho,s\not\models\tau\right),
	\label{eq:inv}
\end{equation}
and the sets of maximal variables corresponds to states of the small countermodel. The part of the invariant quantified
with $\exists$ is called the existential part and the part quantified with $\forall$ is called the universal part.

\begin{figure}[htb]
	\noindent Moves manipulating assumptions:
	\begin{itemize}
		\item[*a1)] If $\alpha$ is an assumption $\beta\lthen\gamma$ then Afrodite chooses between positions $\Gamma,\gamma\vdash\tau$ and $\Gamma\vdash \beta$.
		\item[*a2)] If $\alpha$ is an assumption $\beta\lor\gamma$ then Afrodite chooses between positions $\Gamma,\beta\vdash\tau$ and $\Gamma,\gamma\vdash \tau$. 
		\item[a3)] If $\alpha$ is an assumption $\beta\land\gamma$ then the next position is $\Gamma, \beta, \gamma\vdash\tau$.
		\item[a4)] If $\alpha$ is an assumption $\forall x \varphi$ then Eros chooses a variable $y$ and the next position is $\Gamma,\varphi[y/x]\vdash\tau$.
	\end{itemize}
	Moves manipulating the proof goal:
	\begin{itemize}

		\item[a5)] If $\alpha$ is an assumption $\exists x \varphi$ then the next position is $\Gamma,\varphi[y/x]\vdash\tau$ where $y$ is a fresh variable.
		\item[b1)] If $\alpha$ is an aim of the form $\beta\lthen\gamma$ the next position is $\Gamma,\beta\vdash\gamma$.
		\item[*b2)] If $\alpha$ is an aim of the form $\beta\land\gamma$ then Afrodite chooses between positions $\Gamma\vdash\beta$ and $\Gamma\vdash \gamma$. 
		\item[b3)] If the aim $\alpha$ is an atom or a disjunction the next position is $\Gamma\vdash\alpha$.
		\item[b4)] If $\alpha$ is an aim of the form $\forall x \varphi$ the next position is $\Gamma\vdash\varphi[y/x]$ where $y$ is fresh.
		\item[b5)] If $\alpha$ is an aim of the form $\exists x \varphi$ then Eros chooses a variable $y$ and the next position is $\Gamma\vdash\varphi[y/x]$.
	\end{itemize}
	\caption{Table of moves in position $\Gamma\vdash\tau$ for the intuitionistic game \cite[fig.~11, p.~32]{Urzyczyn2016}. In each move Eros
		chooses a formula $\alpha$ - either an assumption or an aim, and the move is selected from this table according to the $\alpha$ 
		chosen.}
	\label{fig:moves}
\end{figure}

\figref{fig:moves} lists possible moves and the choices players make. We define a strategy for Afrodite and
she makes a choice in cases marked with * in the figure. Afrodite should choose in the indicated moves:
\begin{description}
	\item[\textnormal{(a1)}] We choose $\Gamma,\gamma\vdash\tau$ when $\rho,s\models\Gamma,\,\gamma$.
	\item[\textnormal{(a2)}] We choose $\beta$ when $\rho,s\models\Gamma,\,\beta$.
	\item[\textnormal{(b2)}] We choose $\beta$ when $\rho,s\not\models\beta$.
\end{description}
In case of (b1) and (b4) the current model state $s$ might needs to be advanced to some subsequent
state to keep the invariant.

We still need to show that each move preserves the invariant.
\begin{proposition}
	At each position $\position:\Gamma\vdash\tau$  the invariant (\ref{eq:inv}) holds.
\end{proposition}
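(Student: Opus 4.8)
The plan is to prove the invariant by induction on the depth of the position in Afrodite's strategy tree, i.e.\ on the number of turns played from the initial position $\position_1=\Gamma_1\vdash\tau$. For the base case I would invoke the hypothesis that $\tau$ admits a finite countermodel: by Lemma~\ref{lemma:true-or-false} together with the small model property there is a finite model $\eM$ and a state $s$ with $s\not\models\tau$; taking $\Gamma_1$ empty (or equal to the given assumptions) the existential half of~(\ref{eq:inv}) holds through the empty valuation and the universal half is exactly $s\not\models\tau$. Throughout I would carry along the witnessing valuation $\rho$ delivered by the existential half and maintain the association $\te\mapsto s$ between turns and model states, relying on the stated correspondence between $\simeq$-classes of maximal variables and states of $\eM$ to keep the number of states used at most $s_X(\tau)$.

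For the inductive step I would perform a case analysis over the ten moves of Figure~\ref{fig:moves}, checking that each preserves both halves of~(\ref{eq:inv}). The forced, non-branching moves are read off directly from the matching Kripke clause: (a3) decomposes a conjunctive assumption and leaves satisfaction unchanged; (a5) introduces a fresh $y$ for an assumption $\exists x\varphi$, where the existential clause supplies a domain element $\hat a\in\mathcal A_s$ so that we may set $\rho(y)=\hat a$; (b3) leaves the aim untouched; and the variable-choice moves (a4), (b5), where Eros picks an already-present variable, are harmless since such a variable already denotes an element of $\mathcal A_s$, imposing no new obligation on $\rho$. The Afrodite-choice moves (a1), (a2), (b2) are where her rule is invoked: for (a1) the semantics of $\to$ at $s$ forces either $\rho,s\models\gamma$ or $\rho,s\not\models\beta$, and Afrodite steers into the branch named by her strategy; the cases (a2) and (b2) are the analogous disjunction and conjunction splits.

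The genuinely delicate cases are the goal-manipulating moves (b1) and (b4), where, as the text anticipates, the state $s$ may need to be advanced. For (b1), with aim $\beta\to\gamma$, the universal half at $s$ says that no valuation satisfying $\Gamma$ satisfies $\beta\to\gamma$; unfolding the implication clause yields a state $s'\ge s$ at which $\beta$ holds and $\gamma$ fails, and I would reset the associated state to $s'$, using monotonicity ($A_s\subseteq A_{s'}$ and $P^{\mathcal A_s}\subseteq P^{\mathcal A_{s'}}$) to keep $\Gamma,\beta$ satisfied there. Case (b4) is symmetric: one extracts the witnessing $c'\ge s$ and $\hat a\in\mathcal A_{c'}$ for the failure of $\forall x\varphi$ and assigns the fresh $y$ to $\hat a$.

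I expect the main obstacle to be reconciling the \emph{universal} half of the invariant---a statement about \emph{all} valuations $\rho$---with Afrodite's \emph{single} branch choice and \emph{single} advanced state, since a priori different valuations could witness the refutation of the aim in different branches or at different successor states. The way around this is exactly the correspondence between model states and $\simeq$-classes of maximal variables: because every free variable of $\Gamma$ is pinned to a model element through this correspondence, the relevant valuation is canonical up to $\simeq$, and by Proposition~\ref{prop:replace_ifol} replacing a variable by a $\simeq$-equivalent one does not change satisfaction. Hence the ``for all $\rho$'' collapses to a statement about this canonical valuation, so that one branch choice and one advanced state suffice to preserve~(\ref{eq:inv}).
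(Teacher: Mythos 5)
Your proposal is correct and follows essentially the same route as the paper: a case analysis over the ten moves of Figure~\ref{fig:moves} showing that each preserves both halves of~(\ref{eq:inv}), with the associated model state advanced along $\leq$ in the goal-manipulating cases (b1) and (b4). Your explicit induction framing, the base case drawn from the existence of the finite countermodel, and your remark on reconciling the universal quantifier over valuations with Afrodite's single branch choice add some rigor the paper leaves implicit, but the substance of the argument is the same.
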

\begin{proof}
	We assume the notation of Figure~\ref{fig:moves} and show that each move preserves the invariant (\ref{eq:inv}).

	\begin{itemize}
		\item[(a1)] We have two possibilities:
			\begin{itemize}
				\item In case $\rho,s\models\Gamma,\gamma$: we choose $\Gamma,\gamma\vdash\tau$. The existential
					part of the invariant follows directly from the invariant of the previous step. For the 
					universal part suppose the opposite, i.e $\rho,s\models\tau$, so for given $\rho$ we
					either have contradiction with $\rho,s\not\models\tau$ from invariant of the previous
					step or $\rho,s\not\models\gamma$, but then we would not choose this move for the
					strategy.
				\item Otherwise $\rho,s\not\models\Gamma,\gamma$ and we choose $\Gamma\vdash\beta$. The
					existential part of the invariant remains true as $\Gamma$ does not change. For universal
					part suppose $\rho,s\models\beta$, but then $\rho,s\models\gamma\to\beta$, which
					is in contradiction with the invariant from the previous step.
			\end{itemize}
		\item[(a2)] Once again we have two possibilities:
			\begin{itemize}
				\item In case $\rho,s\models\Gamma,\beta$: we choose $\Gamma,\beta\models\tau$. The existential
					part of the invariant follows directly from the invariant of the previous step. The 
					universal part is the same as in the corresponding point of the move (a1).
				\item Otherwise $\rho,s\not\models\Gamma,\gamma$ and the proof is the same as in the
					corresponding point of (a1).
			\end{itemize}
		\item[(a3)] The existential part is true because $\rho,s\models\beta,\gamma$ follows from $\rho,s\models\beta\land\gamma$.
			The universal part is proven by simply applying the definition of $\models$.
		\item[(a4)] We can choose any value for $\rho(y)$.
			Existential part: from the invariant in the previous move we have $\rho,s\models\forall x\varphi$, 
			so we apply definition of $\models$ to get $\rho,s\models\varphi[y/x]$.
			Universal part: suppose that $\rho,s\models\Gamma,\varphi[y/x]$ and $\rho,s\models\tau$. But this means
			$\rho,s\models\Gamma$ which implies have a contradiction with $\rho,s\not\models\tau$ from the previous move.
		\item[(a5)] Since $\rho,s\models\exists x\varphi$ we know that there exists $\hat x$ such that $\rho[\hat x/x],s\models\varphi$.
			In the existential part we just need to take $\rho(y)=\rho(\hat x)$. 
			Universal part: identical with the universal part of (a4).
		\item[(b1)] Using the assumption we have a state $s'\geq s$ such that $\rho,s'\models\Gamma,\beta$ but
			$\rho,s'\not\models\gamma$. We advance $s$ to $s'$. The existential part is trivially true.
			For the universal part: suppose $\rho,s'\models\Gamma,\beta$ and $\rho,s'\models\gamma$. This is 
			in contradiction with $\rho,s\not\models\beta\to\gamma$.
		\item[(b2)] We have the following cases:
			\begin{itemize}
				\item $\rho,s\not\models\beta$: the set of assumptions does not change so the existential part
					is proven by applying the existential part from the previous move. For the universal part,
					$\rho,s\not\models\beta$ is exactly the assumption of the case under investigation.
				\item otherwise  $\rho,s\models\beta$. We choose the position $\Gamma\vdash\gamma$; the 
					existential part is the same as in the previous step. For the universal part suppose 
					$\rho,s\models\gamma$: then $\rho,s\models\beta\land\gamma$ contradicts
					the invariant $\rho,s\not\models\beta\land\gamma$ from the previous move.
			\end{itemize}
		\item[(b3)] The existential part is the same as in the previous move. The universal part is the same as in the 
			second bullet of (b2).
		\item[(b4)] We can choose any value for $\rho'(y)$. 
			The existential part is true since $\Gamma$ is the same as previously and
			the valuation of $y$ does not affect it. 
			The universal part: suppose that $\rho',s\models\varphi[y/x]$. Then by definition $\rho,s\models\varphi$,
			which is in contradiction with the invariant from the previous move.
		\item[(b5)] The existential part is the same as in the previous move. For the universal part suppose $\rho,s\models
			\varphi[y/x]$. Then by definition of $\models$ we have $\rho,s\models\exists x \varphi$, which is 
			a contradiction with the invariant of the previuos step.
	\end{itemize}\qed
\end{proof}

The constructed strategy is \emph{small}: elements of $\mathcal A_s$ correspond to $\simeq$-classes and 
the valuation $\rho$ proves that all the variables fit in $s_X(\varphi)$ classes as the size of the model is $s_X(\varphi)$.
This proves the following:
\begin{theorem}
	For all classes $X$ that have the small model property and all formulas $\tau\in X$, if a strategy of Afrodite exists for $\tau$ then
	a small strategy of Afrodite for $\tau$ also exists. \label{thm:final1}
\end{theorem}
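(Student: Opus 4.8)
The plan is to convert the hypothesised Afrodite strategy into a small countermodel, feed that countermodel into the construction already assembled above, and then read the bound on the number of maximal $\simeq$-classes off the size of the model. First I would set up the bridge from strategies to countermodels. If Afrodite has a strategy in the initial position $\vdash\tau$, then every play is infinite and Eros never reaches a final position, so $\tau$ is not provable, i.e.\ $\not\vdashFO\tau$; by the completeness proposition this yields $\not\models\tau$, hence a Kripke model, a state $s_0$ and a valuation witnessing $s_0\not\models\tau$. The small model property of $X$ then provides a finite countermodel $\eM$ of size $s_X(\tau)$ (this is exactly where the hypothesis $\tau\in X$ enters). I would fix such an $\eM$ and take its refuting state as the state associated with the root turn, so that the universal part of the invariant~(\ref{eq:inv}) holds there while the existential part is trivial because the initial context is empty.

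Next I would run the construction verbatim: associate a state $s\in\eM$ to every turn, let Afrodite make the choices prescribed for (a1), (a2) and (b2) in \figref{fig:moves}, and advance $s$ to a suitable successor in the cases (b1) and (b4). That this is a legal and \emph{winning} Afrodite strategy follows from the invariant~(\ref{eq:inv}): it is preserved by every move, as already verified move-by-move above, and at any reachable position it forbids both $\tau\in\Gamma$ and $\bot\in\Gamma$, since either would force $\rho,s\models\tau$ or $\rho,s\models\bot$ against the existential witness. Hence no final position is ever reached, every play is infinite, and Afrodite wins; moreover the strategy includes all Eros moves and a determined response at each Afrodite position, so it is a genuine Afrodite strategy.

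The remaining and, in my view, principal task is to show the strategy is \emph{small}, that is, that across the whole tree it uses at most $s_X(\tau)$ $\simeq$-classes of maximal variables. The guiding idea is that the valuation $\rho$ supplied by the invariant maps the free variables occurring at any turn into the carrier $\bigcup_{c}\mathcal A_c$ of the fixed finite countermodel, and that two maximal variables sharing a $\rho$-image may be identified. Concretely I would argue that whenever $\rho(x)=\rho(y)$ the replacement licensed by Proposition~\ref{prop:replace} (which rests in turn on Proposition~\ref{prop:replace_ifol}) collapses one onto a single maximal representative without destroying the strategy, so that the surviving maximal $\simeq$-classes inject into the set of model elements, whose cardinality is bounded by $s_X(\tau)$. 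The hard part is precisely this last correspondence: relating the \emph{semantic} collapse ``same $\rho$-image'' to the \emph{syntactic} order $\preceq$ required to apply Proposition~\ref{prop:replace}, because denoting the same element of a countermodel does not by itself yield provable equivalence. I expect this to require combining soundness of the logic against $\eM$ with the maximality of the variables being counted, and this is the step that must be made watertight before the bound $s_X(\tau)$ can be claimed.
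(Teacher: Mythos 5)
Your proposal is correct and follows essentially the same route as the paper: pass from the existence of an Afrodite strategy to non-provability, invoke completeness and the small model property to obtain a finite countermodel of size $s_X(\tau)$, run the invariant-preserving construction of the strategy from that model, and bound the number of maximal $\simeq$-classes by the number of elements of the model via the valuation $\rho$. The only difference is one of candour: the final step you flag as needing to be made watertight --- relating ``same $\rho$-image'' to the syntactic quasiorder $\preceq$ so that Proposition~\ref{prop:replace} applies --- is exactly the step the paper asserts in one sentence without further argument, so your identification of it as the delicate point is apt rather than a deviation.
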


\section{Small model size and the Arcadian automata}
\label{sec:small-automata}
Here we show a limit on resources of Arcadian automata \cite{poprzedni} checking derivability of a formula
$\varphi$ from an effectively axiomatized class $X$ that has the finite model property.
Theorem~\ref{thm:final2} shows a limit on the size of the set of
eigenvariables of the automaton in terms of the numbers of variables and subformulas in $\varphi$
and the size of the small model. We reason only about automata that are
translated from a formula as defined in Section 4 of \cite{poprzedni}. 
In Section~\ref{sec:arcadian-automata} we introduce Arcadian automata, show how to replace 
variables in their runs in Section~\ref{sec:arcadian-replace}, and limit the size of 
the set of eigenvariables in \ref{sec:arcadian-final}.

\subsection{Arcadian automata}
\label{sec:arcadian-automata}
\paragraph*{Notation} We already know that $\simeq$-maximal variables play a crucial role. Given a set of facts $\Gamma$
we denote by $\check\Gamma$ the set obtained from $\Gamma$ by selecting only those facts $\gamma$ that do have only
maximal variables in $\FV\Gamma$. 
An \emph{Arcadian automaton} is a tuple
$\< \mathcal A, Q, q^0, \varphi^0, \mathcal I, i, \mathrm{fv}\>$, where $\mathcal A$ is a finite tree, $Q$ and $I$ are sets of states and 
instructions with $i$ mapping states to instructions, $\mathrm{fv}:A\to P(A)$ describes the binding of variables and $q^0$
and $\varphi^0$ are the inital state and node. The function $\mathrm{fv}$ satisfies the condition that for all
$v$ either $v$ is a leaf or $\mathrm{fv}(v) = \bigcup_{w\in B(v)}\mathrm{fv}(w)$ where $B(v) = \{
w \mid v\succc w\}$ and $\succ$ is the usual predicate of being a successor.
An \emph{instantaneous description} is $\< q, \kappa, V, w, w', S\>$ where
$q\in Q$  and $\kappa\in A$ are the current state and node, $V$ is a set of
eigenvariables, $w$ and $w'$ are interpretations of bindings and $S$ is the
store. For more details see \cite{poprzedni}.

\subsection{Better variables in Arcadian automata}
\label{sec:arcadian-replace}
\paragraph*{Equivalent positions} 
We say that the position $\Gamma\vdash\tau$ and $\Gamma'\vdash\tau'$ are \emph{equivalent} when $\check\Gamma=\check\Gamma'$ and $\tau=\tau'$.

\begin{proposition}
	Suppose $\Gamma,\hat\Gamma\vdashFO M:\tau$ where for some $\alpha$ and $\alpha'\succeq\alpha$ such that $\alpha\in\FV\Gamma$ and
	$\alpha\not\in\FV{\hat\Gamma}$. If $\Gamma=x_1:\tau_1,\dots,\Gamma$ $\vdashFO x_n:\tau_n$ then
	$\hat{\Gamma},\Gamma,\Gamma'\vdashFO M[x_1/x_1']\dots[x_n/x_n']:\tau[\alpha/\alpha']$ where 
	$\Gamma'=x_1':\tau_1[\alpha/\alpha'],\dots,x_n':\tau_n[\alpha/\alpha']$ and $x_1',\dots,x_n'$ are fresh
	variables, i.e. $x_i'\not\in\FV{M}$.
\end{proposition}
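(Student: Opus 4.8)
The plan is to prove this by structural induction on the typing derivation of $M$, treating the proposition as a statement about simultaneously renaming a fresh-variable spine $x_1,\ldots,x_n$ to fresh copies $x_1',\ldots,x_n'$ while improving one free variable $\alpha$ to a better variable $\alpha'\succeq\alpha$. The statement is essentially a ``weakening plus improvement'' lemma: we take a derivation living in context $\Gamma,\hat\Gamma$ where $\alpha$ occurs only in $\Gamma$, and we produce a derivation in the augmented context $\hat\Gamma,\Gamma,\Gamma'$ where $\Gamma'$ holds renamed copies of the $x_i$ with $\alpha$ replaced by $\alpha'$. Because the new variables $x_i'$ are fresh and the new formulas $\tau_i[\alpha/\alpha']$ are obtained by a uniform substitution, the core engine should be a substitution lemma combined with the monotonicity of $\preceq$ already captured in Proposition~\ref{prop:replace_ifol}.

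First I would set up the induction carefully, making explicit the well-formedness side conditions: $\alpha\in\FV\Gamma$, $\alpha\notin\FV{\hat\Gamma}$, and $x_i'\notin\FV M$, and I would note that since $\alpha\preceq\alpha'$ we have, for each premise formula $\tau_i$, that $\Gamma\vdashFO\tau_i$ implies $\Gamma\vdashFO\tau_i[\alpha/\alpha']$ directly by the definition of $\preceq$ (this is exactly the content of Proposition~\ref{prop:replace_ifol} applied one variable at a time). The base case $(var)$ reduces to checking that the renamed axiom $x_i':\tau_i[\alpha/\alpha']$ is available in $\hat\Gamma,\Gamma,\Gamma'$, which it is by construction of $\Gamma'$; for an ordinary variable not in the spine, the judgement is unchanged because $\alpha$ does not appear in the type. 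For each inductive case I would push the simultaneous substitution $[x_1/x_1']\cdots[x_n/x_n']$ and the improvement $[\alpha/\alpha']$ through the corresponding inference rule of Figure~\ref{fig:rules}, using the induction hypothesis on the subderivations and reassembling with the same rule.

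The main obstacle, and the step I would treat most carefully, is the interaction of the improvement substitution $[\alpha/\alpha']$ with the \emph{eigenvariable-binding} rules $(\forall I)$, $(\exists E)$, and with the fresh first-order variables introduced in $(\forall I)$ and eliminated in $(\exists I)$. The eigenvariable condition demands $X\notin\FV{\Gamma,\sigma}$, and after we enlarge the context to $\hat\Gamma,\Gamma,\Gamma'$ and perform $[\alpha/\alpha']$ we must verify that the bound first-order variable is still fresh for the new context and goal, and does not accidentally coincide with $\alpha$, $\alpha'$, or with any $x_i'$; the capture-avoidance clause in the definition of substitution is what licenses renaming these bound variables apart, so I would invoke it explicitly. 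A secondary subtlety is that $\alpha,\alpha'$ are first-order variables while the $x_i,x_i'$ are proof-term variables, so the two substitutions act on disjoint syntactic categories and commute — I would record this commutation as a small lemma so that writing $M[x_1/x_1']\cdots[x_n/x_n']$ with $\tau[\alpha/\alpha']$ is unambiguous.

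Finally I would assemble the cases into a clean induction, remarking that the branching rules $(\land I)$, $(\lor E)$, $(\to E)$ simply apply the induction hypothesis to each premise over the common context and that the $\preceq$-improvement is only genuinely used where $\alpha$ occurs in a type that feeds an axiom, i.e.\ through $\Gamma'$. I expect the argument to be routine away from the binders, with essentially all of the real work concentrated in verifying freshness and capture-avoidance for the two starred rules; I would therefore state the freshness bookkeeping as the crux and relegate the homomorphic cases to ``by the induction hypothesis and the same rule.''
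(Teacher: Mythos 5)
Your overall strategy --- induction on the derivation, a trivial $(var)$ base case resolved by the construction of $\Gamma'$, and a homomorphic treatment of the remaining rules --- matches the paper's proof, and your observations about the eigenvariable condition and about the commutation of the proof-term substitution with the first-order substitution are sound (the paper does not even spell these out). However, you have misplaced the crux. The paper identifies the genuinely subtle cases as the three rules that \emph{extend the environment}, namely $(\lor E)$, $(\to I)$ and $(\exists E)$, whereas you explicitly dismiss $(\lor E)$ (and, by omission, $(\to I)$) as rules where one ``simply appl[ies] the induction hypothesis to each premise over the common context.'' That is precisely where your plan breaks: in $(\to I)$ the premise is judged in the context $\Gamma,\hat\Gamma,x\!:\!\sigma_1$, which is not the context your induction hypothesis speaks about. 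You must decide how the new assumption is apportioned between $\Gamma$ and $\hat\Gamma$: if $\alpha\in\FV{\sigma_1}$ it cannot be placed in $\hat\Gamma$ (that would violate $\alpha\notin\FV{\hat\Gamma}$), so it must join $\Gamma$, the list $x_1,\dots,x_n$ must grow by one, and $\Gamma'$ must acquire a fresh primed copy $x'\!:\!\sigma_1[\alpha/\alpha']$; only then can the induction hypothesis be invoked, after which the primed copy is the one discharged by the $\lambda$-abstraction and the unprimed $x\!:\!\sigma_1$ is dropped because it no longer occurs in the substituted term. The same bookkeeping is needed for the assumptions introduced by $(\lor E)$ and $(\exists E)$.

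Concretely, your induction needs to be set up with the decomposition into $\Gamma$, $\hat\Gamma$ (and the associated lists $x_i$, $x_i'$) allowed to vary along the derivation rather than fixed once and for all; as written, the hypothesis you would need at the premises of the three environment-changing rules is strictly stronger than the one your induction provides. By contrast, the binder cases $(\forall I)$, $(\forall E)$, $(\exists I)$ that you single out as the main obstacle are handled by routine capture-avoiding renaming and are not where the paper (correctly) locates the difficulty.
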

\begin{proof}
	Proof is by induction over the length of the proof of $\tau$. We look at the last rule in the proof. 
	In most of the cases the conclusion follows by simple application of the inductive hypothesis,
	but there are three rules that change the environment, namely
	($\lor E$), $(\to I$) and ($\exists E$) and the proof is more subtle for them.
	Let us focus on the $(\to I)$ rule. If $(x_i:\tau_i)\in\hat\Gamma$ we do not need to change anything,
	in the other case we know that $(x_i:\tau_i)\in\Gamma$ and we apply the induction hypothesis and use the assumption
	$(x_i':\tau_i[\alpha/\alpha'])\in\Gamma'$ for the $\lambda$-abstraction.
	We can now remove the variable $x_i:\tau_i$ as it is not referenced in
	$M[x_1/x_1']\dots[x_n/x_n']:\tau[\alpha/\alpha']$.
	
	The induction base is the \emph{var} rule, since the proof must
	begin with this rule, and the correctness of replacing $\alpha$ with $\alpha'$ follows immediately from
	the definition of $\Gamma'$.\qed
\end{proof}
\begin{proposition}
	If $\Gamma\vdashFO M:\tau$, $\alpha$ and $\alpha'$ are variables in $M$ such that
	$\alpha\preceq\alpha'$ then $\Gamma\vdashFO M[\alpha/\alpha']:\tau[\alpha/\alpha']$ and $\alpha\not\in FV(\tau)$.
	\label{prop:proof-replace-one}
\end{proposition}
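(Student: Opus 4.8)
The plan is to separate the two assertions. That $\Gamma\vdashFO\tau[\alpha/\alpha']$ is provable is immediate: it is the one-variable instance of Proposition~\ref{prop:replace_ifol}, equivalently the defining clause of $\alpha\preceq\alpha'$ applied to the formula $\tau$ (using that $\Gamma\vdashFO M:\tau$ entails $\Gamma\vdashFO\tau$). The real content is therefore \emph{proof-relevant}: I must show that the syntactic substitution $M[\alpha/\alpha']$ is itself a well-typed witness, not merely that some witness exists. I would prove this by induction on the derivation of $\Gamma\vdashFO M:\tau$, pushing $[\alpha/\alpha']$ simultaneously through the proof term and through every formula occurring in the derivation, exactly along the lines of the preceding proposition.

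For the induction step I would inspect the last rule. For every rule that does not touch the environment (the $\land$, $\lor$ introductions and eliminations, $(\to E)$, $(\forall E)$, $(\exists I)$, $(\bot E)$) the substitution commutes with the term and type formers, so the rule re-applies verbatim to the substituted premises delivered by the induction hypothesis. The rules that introduce or bind variables, namely $(\to I)$, $(\lor E)$, $(\exists E)$ and $(\forall I)$, are the ones needing care, as already flagged in the preceding proposition: here I would first $\alpha$-rename the bound proof variable and, in $(\forall I)$ and $(\exists E)$, the eigenvariable, so that it is distinct from both $\alpha$ and $\alpha'$. The eigenvariable condition $X\notin\FV{\Gamma,\sigma}$ is then preserved under the substitution and no capture occurs.

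The main obstacle is the base case $(var)$. Since the statement keeps $\Gamma$ fixed rather than substituting into it, a leaf that uses a hypothesis $x:\sigma$ with $\alpha\in\FV{\sigma}$ cannot be discharged by $(var)$ alone, because $\Gamma$ contains $x:\sigma$ and not $x:\sigma[\alpha/\alpha']$. This is precisely the gap closed by $\alpha\preceq\alpha'$: from $\Gamma\vdashFO\sigma$ the preceding proposition supplies a proof of the upgraded fact $\sigma[\alpha/\alpha']$, and inserting these derivations at the leaves turns $M[\alpha/\alpha']$ into a genuine derivation against the unchanged $\Gamma$. This is where the hypothesis $\alpha\preceq\alpha'$ is consumed, and doing it uniformly, so that the resulting object is still correctly described by the substitution on $M$, is the delicate book-keeping of the argument. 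Finally, the side condition on $\alpha$ is harmless: read as a statement about the resulting goal $\tau[\alpha/\alpha']$, capture-avoiding substitution removes every free occurrence of $\alpha$, so (taking $\alpha\neq\alpha'$) the variable $\alpha$ no longer occurs free after the replacement.
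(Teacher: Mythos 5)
The paper gives no proof of this proposition at all; it is stated bare, evidently meant as a consequence of the immediately preceding proposition, whose proof is the induction you are reconstructing. Your decomposition is sensible: the bare provability of $\tau[\alpha/\alpha']$ is indeed immediate from the defining clause of $\alpha\preceq_\Gamma\alpha'$, so the only content is the shape of the witness, and you correctly locate the crux at the $(var)$ leaves, where $\Gamma$ still contains $x:\sigma$ rather than $x:\sigma[\alpha/\alpha']$ and the hypothesis $\alpha\preceq\alpha'$ must actually be spent. Your reading of the stray clause $\alpha\not\in\mathrm{FV}(\tau)$ as referring to $\tau[\alpha/\alpha']$ is also the only sensible one.

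However, the induction as you describe it does not go through for the context-extending rules $(\to I)$, $(\lor E)$, $(\exists E)$. With the statement kept in its given form ($\Gamma$ fixed, only $M$ and $\tau$ substituted), the induction hypothesis applied to the premise $\Gamma, x\!:\!\tau_1\vdash M:\tau_2$ of $(\to I)$ returns $\Gamma, x\!:\!\tau_1\vdash M[\alpha/\alpha']:\tau_2[\alpha/\alpha']$: the freshly bound assumption $\tau_1$ is \emph{not} substituted, so reassembling with $(\to I)$ yields $\tau_1\to\tau_2[\alpha/\alpha']$ rather than $(\tau_1\to\tau_2)[\alpha/\alpha']$. The only care you propose for these cases is $\alpha$-renaming of bound proof variables and eigenvariables, which addresses capture but not this mismatch. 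Worse, to invoke the induction hypothesis there at all you would need $\alpha\preceq_{\Gamma,\tau_1}\alpha'$, and the quasiorder is not stable under extending $\Gamma$: adding an assumption with $\alpha\in\FV{\tau_1}$ can destroy it. This is precisely why the paper's preceding proposition is stated in the strengthened form carrying both the original hypotheses and their substituted images ($\Gamma$ together with $\Gamma'$), with the $(\to I)$ case abstracting over the substituted copy $x_i'\!:\!\tau_i[\alpha/\alpha']$ instead of the original; the present proposition should be obtained as a corollary of that statement, not by a fresh induction on the fixed-$\Gamma$ formulation. Finally, once you graft derivations of $\sigma[\alpha/\alpha']$ onto the $(var)$ leaves, the resulting witness is no longer literally $M[\alpha/\alpha']$ but that term with some free proof variables replaced by the new witnesses; you flag this as ``delicate book-keeping,'' but it is exactly the point where either the strengthened context of the preceding proposition is needed or the claim must be weakened to mere existence of a witness.
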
\vspace{-1.6em}
\begin{proposition}
	If $\Gamma\vdashFOx{p} M:\tau$ then there exists $M'$ and $p'$ such that $\Gamma\vdashFOx{p'} M':\tau$ and
	in each step $\Gamma'\vdash\tau'$ of $p'$ only maximal variables are mentioned in $\tau'$.
\end{proposition}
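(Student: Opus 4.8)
The plan is to normalise $p$ by rewriting its goals step by step, replacing each non-maximal variable that occurs in a goal by a $\preceq$-greater maximal one, until every goal mentions only maximal variables. The single-step engine is Proposition~\ref{prop:proof-replace-one}: whenever variables $\alpha\preceq\alpha'$ occur in the proof term, it lets us pass from $\Gamma\vdashFO M:\tau$ to a valid derivation of $M[\alpha/\alpha']$, and Proposition~\ref{prop:replace_ifol} certifies that the rewritten goal is still derivable from $\Gamma$. I would begin by noting that only finitely many variables occur in $p$; since $\preceq$ is a quasiorder, on this finite set every variable lies below some maximal variable, so I can fix, for each non-maximal $\alpha$, a maximal $\alpha'$ with $\alpha\preceq\alpha'$.

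I would then proceed by induction on the structure of $p$, treating the inference rules in three groups. At the rules that offer a genuine choice of witness, namely $(\forall E)$ and $(\exists I)$ (the moves (a4) and (b5) of Figure~\ref{fig:moves}), I commit to a maximal witness, so that the instantiated goal $\tau[X:=Y]$ already mentions only maximal variables. At rules that do not alter the context the inductive hypothesis applies directly once the replacement above has been performed on the goal. The work concentrates on the context-changing rules $(\to I)$, $(\lor E)$ and $(\exists E)$; for these I would reuse the bookkeeping developed for Proposition~\ref{prop:replace} and for the substitution lemma opening Section~\ref{sec:arcadian-replace}, splitting the hypotheses that mention the replaced variable from the rest and substituting coherently in every branch. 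A freshly introduced eigenvariable is $\preceq$-minimal and carries no facts, so it is never promoted and the eigenvariable condition is preserved throughout.

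I expect the main obstacle to be that maximality is relative to $\Gamma$, and the context grows along each branch of the tree: a variable maximal near the root may lose maximality once further facts are assumed above it, so the promotions chosen on the different branches must be made coherently rather than independently, and the variables forced into a goal by a $(\forall I)$ step must be reconciled with this global choice. Once this coherence is secured, termination is immediate, since each replacement strictly deletes occurrences of a fixed non-maximal variable from the goals while introducing only maximal ones, and there are finitely many goals and variables; the result of the process is the required pair $M'$ and $p'$.
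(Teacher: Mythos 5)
Your proposal is correct and follows essentially the same route as the paper: the paper's entire proof is the single sentence ``Apply Proposition~\ref{prop:proof-replace-one} sequentially to each nonmaximal variable,'' which is exactly your single-step engine iterated over the finitely many non-maximal variables occurring in $p$. The structural induction you sketch and the coherence concerns you raise (in particular that maximality is relative to a context $\Gamma$ that grows along the derivation) are elaborations of details the paper leaves implicit rather than a different argument.
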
\vspace{-1.6em}
\begin{proof}
	Apply Proposition~\ref{prop:proof-replace-one} sequentially to each nonmaximal variable.\qed
\end{proof}\vspace{-2em}

\subsection{Loquacious runs}
\label{sec:arcadian-final}
Note that our logic has the subformula property (\cite{poprzedni}). This means that there is only a limited
number of possible targets $\tau$. Let us review fragments of an automaton run 
$p_0\to_\alpha\Gamma\vdash\tau\to_\beta\Gamma'\vdash\tau\to_\gamma p_1$. If $\Gamma\vdash\tau$ and $\Gamma'\vdash\tau$
are equivalent, then a part of the run $\to_\beta$ is removable, i.e. there exists a run of the same automaton
$p_p\to_\alpha\Gamma\vdash\tau\to_{\bar\gamma}\bar p_1$ where
$\bar \gamma$ and $\bar p_1$ are obtained from $\gamma$ and $p_1$ by replacing some variables by their maximal counterparts.
Otherwise that fragment is not removable, as new fact about the maximal variables are discovered, but the number of 
such non-removable runs is limited: there are at most $s_X(\varphi)$ maximal variables. Suppose $v(\varphi)$ is the number
of variables in $\varphi$ and $f(\varphi)$ is the number of subformulas in $\varphi$. The maximum size of 
an environment $\Gamma$ is $\mu=f(\varphi)\cdot s_X(\varphi)^{v(\varphi)}$. Each non-trivial step has to either add something
to the environment or change the target $\tau$ as otherwise the previous state is repeated.
We have at most $\mu$ possible targets, so after at most $\mu$ steps the target
repeats and in the worst case each step introduces a new variable, so the maximum size of $V$ in the automaton is $\mu^2$.

This proves the following:
\begin{theorem}
	Let $\tau$ be a formula from an effective class $X$ that has the small model property. For a given accepting run
	of an Arcadian automaton for that formula there exists an accepting run of the same automaton with the same
	result that has the property $|V|\leq\mu^2$, where $V$ is the working domain of the automaton and $\mu$ is the maximum
	size of the environment defined in the previous paragraph.
	\label{thm:final2}
\end{theorem}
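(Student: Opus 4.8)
\emph{Proof plan.} The plan is to take an arbitrary accepting run and transform it, without changing the state it accepts in or its result, into one whose working domain $V$ is bounded by $\mu^2$; the transformation has two stages. In the first stage I would normalise every position so that only $\simeq$-maximal variables are mentioned, applying the (unlabelled) proposition at the end of Section~\ref{sec:arcadian-replace}, the one producing from $\Gamma\vdashFOx{p}M:\tau$ a proof $p'$ in which every intermediate target mentions only maximal variables. This makes the equivalence classes of positions coincide with those counted by $\mu$: each fact is one of the $f(\varphi)$ subformulas with at most $v(\varphi)$ argument slots filled from at most $s_X(\varphi)$ maximal classes, so $|\check\Gamma|\le\mu$, and likewise there are at most $\mu$ distinct targets.

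In the second stage I would repeatedly locate a fragment $p_0\to_\alpha\Gamma\vdash\tau\to_\beta\Gamma'\vdash\tau\to_\gamma p_1$ in which $\Gamma\vdash\tau$ and $\Gamma'\vdash\tau$ are \emph{equivalent}, delete the $\to_\beta$ segment, and splice $\to_\gamma$ directly onto $\Gamma\vdash\tau$. To keep the tail legitimate I would use Proposition~\ref{prop:proof-replace-one} to replace, in $\gamma$ and $p_1$, each variable introduced during $\to_\beta$ by the maximal counterpart already available at $\Gamma\vdash\tau$; since $\check\Gamma=\check\Gamma'$ the facts the tail relies on remain derivable and the same accepting configuration is reached. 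Each excision strictly shortens the run, so the process terminates.

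Once no removable loop survives, every remaining nontrivial step must either add a new fact to $\check\Gamma$ or change the target, since otherwise it would return to an equivalent position and thus be removable. The environment $\check\Gamma$ is monotone and can grow at most $\mu$ times; between two successive growths the target cannot repeat (a repeat would be a removable loop), so each growth phase contributes at most $\mu$ steps. This bounds the total number of steps by $\mu^2$, and since each step introduces at most one fresh eigenvariable into $V$, we conclude $|V|\le\mu^2$.

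I expect the main obstacle to be the soundness of the second stage. The replacement propositions of Section~\ref{sec:arcadian-replace} are stated for the logical judgements $\Gamma\vdashFO\tau$, whereas an Arcadian run carries the full instantaneous description $\<q,\kappa,V,w,w',S\>$. Showing that excising $\to_\beta$ together with the variable renaming yields a genuine run of the \emph{same} automaton requires verifying that the states, the node $\kappa$, the bindings $w,w'$ and the store $S$ remain consistent after the splice, and in particular that the automaton's eigenvariable freshness conditions survive when the variables discovered during $\to_\beta$ are collapsed onto their maximal counterparts. This is where the interface between the definitions of \cite{poprzedni} and the quasiorder $\preceq$ must be handled with care.
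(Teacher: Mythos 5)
Your proposal follows essentially the same route as the paper: excise fragments between equivalent positions (replacing the variables introduced there by their $\preceq$-maximal counterparts via the replacement propositions of Section~\ref{sec:arcadian-replace}), then bound the surviving run by observing that each nontrivial step must enlarge $\check\Gamma$ or change the target, with at most $\mu$ environments and $\mu$ targets giving $|V|\leq\mu^2$. Your version is somewhat more explicit than the paper's on the counting (separating the $\mu$ growth phases from the $\mu$ steps per phase) and you correctly flag, as the paper does not, that the excision argument is only justified at the level of $\vdashFO$-judgements rather than full instantaneous descriptions $\<q,\kappa,V,w,w',S\>$ — but this does not constitute a different approach.
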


\section{Conclusion}

The small model size is directly related to important
resources in games and automata for checking provability.  In terms of
games, the elements of models directly correspond to abstraction
classes of maximal elements of a quasiorder on eigenvariables that captures the
relation of having more information available about a variable.

For automata the number of such maximal elements can be directly related to the size of set
of eigenvariables $V$; the dependency is exponential, caused by the necessity of
representing the eigenvariables that correspond to non-maximal elements of the quasiorder.

These observations lead to an idea for implementing proof theory bases proves in a manner
that would not be substantially less powerful than those based on model theory. More specifically we
suggest that $V$ should not be represented syntactically but rather as an abstraction class of the
quasiorder.

\bibliography{finmodel}

\end{document}